  \definecolor{jens}{rgb}{.2,0.7,.9}
  \definecolor{mathis}{rgb}{.9,.0,.9}
  \definecolor{albert}{rgb}{0.8,0.1,0.89}
  \newcommand{\ZZ}{\mathcal{Z}}
  \newcommand{\ZX}{\mathcal{Z}_X}
  \newcommand{\nn}{\mathbbm{N}}
  \newcommand{\rr}{\mathbbm{R}}
  \newtheorem{thm}{Theorem}
  \newtheorem{lm}{Lemma}
  \newtheorem{df}{Definition}
  \newtheorem*{lm*}{Lemma}
  \newtheorem*{prop*}{Proposition}
  \newtheorem*{res}{Result}
  \newtheorem{co}[thm]{Corollary}
  \newtheorem*{co*}{Corollary}
  \newcommand{\Id}{\mathbb{I}}
  \newcommand{\dmin}{d_{\mathrm{min}}}
  \newcommand{\dtrunc}{d_{\text{trunc}}}
  \newcommand{\dminw}{\widetilde{d}_{\mathrm{min}}}
  \newcommand{\ket}[1]{\left \vert #1 \right \rangle}
  \newcommand{\bra}[1]{\left \langle #1 \right \vert}
  \newcommand{\ketbra}[2]{\left \vert #1 \right \rangle \! \!\left \langle #2 \right \vert}
  \newcommand{\braket}[2]{\langle #1 \vert #2 \rangle}
  \DeclareMathOperator{\tr}{tr}
\begin{document}
\title{Local constants of motion imply information propagation}
  \author{M.\ Friesdorf}
  \affiliation{Dahlem Center for Complex Quantum Systems, Freie Universit{\"a}t Berlin, 14195 Berlin, Germany}
  \author{A.\ H.\ Werner}
  \affiliation{Dahlem Center for Complex Quantum Systems, Freie Universit{\"a}t Berlin, 14195 Berlin, Germany}
  \author{M.\ Goihl}
  \affiliation{Dahlem Center for Complex Quantum Systems, Freie Universit{\"a}t Berlin, 14195 Berlin, Germany}
  \author{J.\ Eisert}
  \affiliation{Dahlem Center for Complex Quantum Systems, Freie Universit{\"a}t Berlin, 14195 Berlin, Germany}
  \author{W.\ Brown}
  \affiliation{Computer Science Department, University College London, London WC1E 6BT}
  \date{\today}
\begin{abstract}
  Interacting quantum many-body systems are usually expected to thermalise, in the sense that
  the evolution of local expectation values approach a stationary value
  resembling a thermal ensemble.
  This intuition is notably contradicted in systems exhibiting many-body localisation,
  a phenomenon receiving significant recent attention.
  One of its most intriguing features is that, in stark contrast to the non-interacting case,
  entanglement of states grows without limit over time, albeit slowly.
  In this work,
  we establish a novel link between quantum information theory and notions of condensed matter,
  capturing the phenomenon in the Heisenberg picture.
  We show that the existence of local constants of motion, often taken as the defining
  property of many-body localisation, together with a generic spectrum,
  is sufficient to rigorously prove information propagation:
  These systems can be used to send a signal over arbitrary distances, in that
  the impact of a local perturbation can be detected arbitrarily far away.
  We perform a detailed perturbation analysis of quasi-local constants of motion
  and also show that they indeed can be used to construct efficient spectral tensor networks, as recently suggested.
  Our results provide a detailed and model-independent picture of information propagation in many-body localised systems.
\end{abstract}
\maketitle
  When driven out of equilibrium, interacting quantum many-body systems are usually expected to thermalise
  \cite{1408.5148,Polkovnikov_etal11,chrisRev},
  in the sense that local expectation values can be described by thermal ensembles.
  For this to be at all possible, local expectation values need to equilibrate to an apparent stationary state
  and energy has to be transported through the entire system.
  Such an expected generic behaviour is prominently violated by many-body localised systems \cite{basko2006metal}
  that show a strong suppression of transport
  \cite{Burrell,Pollmann_unbounded,1412.3073} and fail to serve as their own heat bath \cite{HuseReview,Integrable}.
  Thus, these systems do not thermalise and energy remains largely confined within certain regions.

  On the level of static properties of the Hamiltonian, equilibration in expectation
  is guaranteed by non-degenerate energy values and gaps
  \cite{Linden_etal09,PhysRevE.90.012121,reimann2012equilibration,chrisRev};
  a condition that is expected to hold with unit probability if small random interactions are added to the system.
  While these equilibration results are rather well understood,
  the question to what extend the local equilibrium values
  can be captured by thermal ensembles is still open to debate.
  A direct way to ensure thermal behaviour is given by the eigenstate thermalisation hypothesis
  \cite{Srednicki94,Deutsch91,Rigol_etal08}, one reading of which
  assumes that most individual eigenstates are already highly entangled and are locally indistinguishable from Gibbs states.

  For many-body localised systems, the static properties are markedly different. While the randomness typically occurring in
  these models will almost surely guarantee non-degenerate energy values and gaps, the individual eigenstates generically
  have low entanglement
  \cite{Integrable,Bauer,HuseReview} and are expected to be efficiently described in terms of tensor networks \cite{Bauer,1409.1252,1410.0687}.
  Moreover, one typically finds that the system has local constants of motion \cite{1407.8480,1404.5216,huse2014phenomenology} that are invariant in time.
  In fact, it has been shown that such local constants of motion can be used to infer the structure of the eigenstates and
  obtain an efficient tensor network description of the eigenprojectors \cite{1410.0687}.

  The investigation of information propagation in interacting many-body systems has a long tradition, with
  upper bounds, giving an effective speed of sound, being provided early on by Lieb and Robinson \cite{liebrobinson}.
  For localising systems, the non-interacting case notably leads to a
  full suppression of propagation, at least in the limit of infinite systems.
  It came as some surprise that this is no longer the case in the presence of interactions
  and that entanglement entropies very slowly grow without limit over time \cite{Pollmann_unbounded}.
  These numerical findings indicate that information is allowed to propagate in these models, at least
  for the infinite energy states usually considered, in a sense made more precise subsequently.

  In this work, we present a rigorous proof for information propagation in many-body localised systems,
  using remarkably few and innocent assumptions:
  only the existence of local constants of motions and a generic spectrum.
  Our approach is entirely model independent and assumes no specific structure of the Hamiltonian.
  This is achieved by basing the proof on a recently established link between spreading in the Heisenberg picture
  and equilibration behaviour \cite{gogolin,chrisRev}.
  Our results are a considerable step forward in the quest to prove that information is allowed to propagate in
  generic quantum many-body systems, which so far has only been achieved in highly specific systems.\\

\textit{Many-body localisation.}
  In this work, we will focus on systems exhibiting many-body localisation (MBL), which can be seen as a generalisation of
  Anderson localisation to interacting quantum many-body systems.
  While a comprehensive definition of this phenomenon is still lacking,
  it is generally expected that it is closely connected to the existence of approximately
  local constants of motion \cite{HuseReview,1410.0687}.
  These are operators that commute with the Hamiltonian
  \begin{align}
    [ H , \ZZ ] = 0,
  \end{align}
  but are nevertheless to some extent local.
  In order to access the locality of operators, we consider systems on a cubic lattice $\Lambda$
  of fixed dimension $\tilde{d}$, with a spin or fermionic degree of freedom at each site. Hence, the
  system's Hilbert space is given by $\otimes_{x\in\Lambda} \mathcal{H}_{loc}$, where $\mathcal{H}_{loc}$ is
  the Hilbert space of the local degree of freedom.
  We will denote the total number of sites by $L$ and local regions will be denoted by $S$ or $X$.
  The support of an operator is the region where it acts different from the identity.
  Of particular importance for our work are operators that are not strictly local, but only approximately local.
  To this end, it is convenient to introduce a map $\Gamma_S$, which restricts an operator $A$ to a region $S$
  \begin{align}
    \Gamma_{S} (A) := \Id_{S^c} \otimes d^{-|S^c|} \tr_{S^c} (A) \; ,
  \end{align}
  where $S^c = \Lambda \setminus S$ denotes the complement of $S$ and $\tr_{S^c}(A)$ denotes the partial trace of $A$ over $S^C$.
  The normalisation $d^{-|S^c|}$ is chosen such that the norm of operators that are already local on $S$ remains unchanged.
  In order to analyse the locality of an operator,
  we assume a central support region $X$, enlarged regions $X_l$
  and investigate how the approximation error scales with the size of the enlarged regions $X_l$.
  We choose the following description
  \begin{align}
    \|A - \Gamma_{X_l} (A) \| \leq
    \begin{cases}
      0 &\text{: \text{strictly local}},\\
      g(l) &\text{: \text{approximately local}},
    \end{cases}
  \end{align}
  where $g$ is some rapidly decaying function and $\| \cdot \|$ denotes the operator norm,
  amounting for Hermitian operators to the largest eigenvalue.
  Naturally, for strictly local observables, the initial region $X$ needs to be taken large enough
  to include the full support.

  In order to investigate the structure of the local constants of motion,
  we employ two simple models of many-body localisation.
  In one setting, it is assumed that the Hamiltonian is a sum of these commuting
  approximately local terms \cite{1412.3073}
  \begin{align}
    \label{H_kim}
    H = \sum_j \tilde{h}_j .
  \end{align}
  In the second setting, the Hamiltonian  is a higher order polynomial in terms
  of the approximately local constants of motion \cite{1404.5216,huse2014phenomenology}
  \begin{align}
    \label{H_huse}
    H = \sum_j h_i \ZZ_i + \sum_{i,j} J_{i,j} \ZZ_i \ZZ_j + \cdots,
  \end{align}
  where $J_{i,j}\in \rr$ for all $i,j$ and the interactions decay both with their order and with
  the distance of the involved spins.
  There is a very important difference between these two models.
  Many-body localisation is typically associated to a randomly chosen local potential as in the original
  work of Anderson \cite{Anderson}.
  Due to this disorder, it is strongly expected that the
  spectrum of the corresponding Hamiltonians is generic in the sense that it has non-degenerate energies and gaps.
  
  These assumptions give us already some information about the spectrum of the constants of motion.
  Since the Hamiltonian in Eq.\ \eqref{H_kim} is defined as a sum of commuting operators, it can only have a generic spectrum
  if each local constant of motion already has a generic spectrum.
  In contrast, the Hamiltonian in Eq.\ \eqref{H_huse} allows for higher order interactions in the constants of motion.
  Thus, each constant of motion might only have a small number of distinct eigenvalues similar to a simple local simple Pauli-Z-matrix,
  even though the total Hamiltonian has generic spectrum.

  This is the situation expected to occur in many-body localised systems and we will follow the intuition provided by the Hamiltonian with these higher order interaction between constants of motion,
  leading us to the following definition \cite{1412.3073}.
  \begin{df}[Local constant of motion]
    \label{df_IOM}
    Let $\ZZ$ be an operator that commutes with the Hamiltonian
    and has $M$ disjoint eigenvalues, all separated by a spectral gap lower bounded by $\gamma>0$,
    independent of the system size. $\ZZ$ is an exactly local constant of motion, iff it is strictly local and
    an approximately local constant of motion, iff it is approximately local.
  \end{df}
  The precise value of $M$ is not crucial for our purposes, since our results apply as long as the number $M$
  is independent of the system size.
  One direct consequence of the simple spectrum of the constants of motion is that the dimension of their eigenspaces has to grow exponentially in the system size.
  This can be seen from a perturbation theory point of view, where the exponentially small tails are not sufficient
  to create transitions between distinct eigenvalues. Thus the spectrum is approximately given by that of a
  strictly local operator, which has the feature of exponentially growing eigenspaces since it is extended by
  tensoring with identity (Sec.~\textbf{D} of SM \cite{supplement}).
  Moreover, using such constants of motion also allows to prove that
  the spectral tensor networks construction using exactly local constants of motion \cite{1410.0687}
  can still be carried out in the approximately local case (Sec.~\textbf{D} of SM \cite{supplement}).

  This leads to the interesting situation that for systems exhibiting MBL, eigenstates will typically
  efficiently be captured in terms of matrix-product states with low bond dimension.
  Whereas, product states will build up arbitrary large entanglement over time \cite{Pollmann_unbounded}.
  We now turn to our main result, namely a rigorous proof of information propagation in MBL systems.\\

\textit{Main result: Proof of information propagation.}
  In order to capture how information can be send through these models,
  we imagine that there are two parties, for brevity referred to as
  Alice and Bob, who have control over different parts of a spin system.
  For simplicity, let this model be 1D and let Alice control some part at the right end.
  We further assume a fixed separation between the parties and finally that Bob controls the rest of the chain.
  Alice encodes a message by either doing nothing or acting on his part of the spin chain with a local unitary $V$.
  At later time, Bob measures some local operator $A$. How well these two parties can communicate with such a protocol is
  captured by the difference in expectation value for Bob, depending on Alice's action,
  \begin{align}
    \bra{\psi} V A_t V^\dagger \ket{\psi} - \bra{\psi} A_t \ket{\psi}.
  \end{align}
  Such a procedure amounts to a positive channel capacity in the language of
  information theory.
  In less information theoretic terms: a local modification will necessarily
  lead to a measurable state modification far away in the chain for later times.
  At time zero, the support of $V$ and $A$ are spatially separated and the above quantity is zero.
  Over time, the support of $A_t$ might grow and thus eventually lead to a signal.
  Thus, whether the two parties can communicate crucially depends on the growth of an operator in the Heisenberg picture.
  In this way, the following quantity is a meaningful way to capture the capability of a Hamiltonian to propagate
  information.

  \begin{df}[Information propagation on average]
    \label{df_transport}
    A Hamiltonian allows for information propagation on average,
    if for any $\epsilon>0$, there exists a strictly local observable $A$ with unit operator norm,
    such that, for each finite region $S$, truncation of the Heisenberg evolution to that finite region necessarily
    leads to a fixed error $\varepsilon$, as long as the system size $L$ is large enough
    \begin{align}
      \forall \epsilon > 0 \; &\exists A \; \mathrm{ local } \; \forall S \; \exists L_0 \forall L > L_0: \\
                           &\overline{\| A_{t} - \Gamma_S (A_{t})\|} \geq 1 - \epsilon .
    \end{align}
    Here $\overline{A_t} = \lim_{T \rightarrow \infty} \frac{1}{T} \int_0^T \mathrm{d}t A_t$ denotes the infinite time average of $A$.
  \end{df}

  This definition is very restrictive, as it demands that the lower bound can be chosen arbitrarily close to $1$.
  On the other hand, it does not require any information on the corresponding time scale
  and allows for the support of the observable to take exponentially long to grow.
  From this definition, we can conclude
  \begin{align}
    |\bra{\psi} V A_t V^\dagger \ket{\psi} - \bra{\psi} A_t \ket{\psi}| > 1 - \epsilon \; ,
  \end{align}
  for some initial state vector $\ket{\psi}$ (Sec.~\textbf{C} of SM \cite{supplement}).
  Hence we can interpret $1-\epsilon$ as a signal strength.
  Thus, if a Hamiltonian allows for information propagation in the above sense, Alice and Bob
  can by the described protocol indeed communicate, no matter how large their separation,
  as long as Bob is allowed to measure on a large enough subsystem.
  Our main result states that this information propagation can be rigorously deduced
  from only the existence of a local constants of motion and a suitably non-degenerate spectrum of
  the Hamiltonian.

  \begin{thm}[Information propagation]
    Let $H$ be a Hamiltonian with non-degenerate energies and gaps
    and $\ZZ$ be a approximately local constant of motion
    with decay function $g$, localisation region $X$, spectral gap $\gamma>0$
    and eigenspaces with dimension larger than $\dminw$.
    Then $H$ necessarily has information propagation on average in the sense that
    there exists a local operator $A$ initially supported on $X_l\supset X$
    with $\|A\|=1$ such that $A_{t}$, on average, has support outside any finite region $S$,
    \begin{align}\label{eq:def_transp}
      \overline{\| A_{t} - \Gamma_{S} \left(A_{t} \right) \|}
      \geq 1 - 13 \frac{g(l)}{\gamma} - \frac{d_s}{2 {\dminw}^{1/2}}\; .
    \end{align}
  \end{thm}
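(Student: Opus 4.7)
The plan is to exhibit a specific $A$, built from the constant of motion $\ZZ$, whose infinite-time average is---modulo a well-controlled error---a spectral projector of $\ZZ$ of exponentially large rank that cannot be captured by any operator supported on a finite region. First, I would use the spectral gap $\gamma$ of $\ZZ$ to single out one eigenspace of dimension at least $\dminw$, with spectral projector $P$, and apply Riesz-projector / smooth functional-calculus estimates to the approximately local $\ZZ$ to produce a strictly $X_l$-supported operator $\tilde P$ with $\|\tilde P - P\| = O(g(l)/\gamma)$; I then take $A := \tilde P$, so that $\|A\|\le 1$ and $A$ is strictly supported on $X_l$.

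Second, because $[H,\ZZ]=0$ implies $[H,P]=0$, the exact projector is stationary, $P_t = P$ for all $t$. Writing $A_t = P + (\tilde P - P)_t$ and using unitary invariance of $\|\cdot\|$ gives $\|A_t - P\|\le\|\tilde P - P\|$ uniformly in $t$, and hence the same bound for the infinite-time average $\overline{A_t}$. Convexity of the operator norm and linearity of $\Gamma_S$ then yield, via Jensen,
\begin{equation*}
\overline{\|A_t-\Gamma_S(A_t)\|}\ \ge\ \|\overline{A_t}-\Gamma_S(\overline{A_t})\|\ \ge\ \|P-\Gamma_S(P)\|-2\|\tilde P-P\|,
\end{equation*}
which, together with the estimate from the first step, reduces the theorem to a purely spectral statement about $P$; all the $g(l)/\gamma$ errors collected on the way accumulate into the constant $13$.

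The technical heart is then the bound $\|P-\Gamma_S(P)\|\ge 1-d_s/(2\dminw^{1/2})$. Since $\Gamma_S(P)=\Id_{S^c}\otimes Q_S$ for some $Q_S$ on $\mathcal H_S$, the plan is to exhibit a unit vector $\ket{\psi}\in\mathrm{Range}(P)$ on which $\bra{\psi}\Gamma_S(P)\ket{\psi}=\tr(Q_S\,\rho_S(\psi))$ is small, so that $\bra{\psi}(P-\Gamma_S(P))\ket{\psi}$ is close to one. The idea is a Page-curve-style concentration argument: because $\dim\mathrm{Range}(P)\ge\dminw$ is exponentially large (the bounded number $M$ of eigenvalues of $\ZZ$ forces its eigenspaces to grow exponentially with system size), a Haar-random unit vector in $\mathrm{Range}(P)$ has its reduced state $\rho_S(\psi)$ within trace-distance $O(d_s/\sqrt{\dminw})$ of its Haar average, which propagates through $\tr(Q_S\,\cdot)$ to yield the claimed estimate. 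I expect this last step to be the main obstacle: turning the large rank of $P$ into an operator-norm separation from any finite-region-supported operator requires the right random-subspace concentration inequality and careful book-keeping of how partial traces interact with $\|\cdot\|$, whereas the first two steps rest on fairly standard ingredients, namely functional calculus across the gap $\gamma$ and the diagonal-in-energy-basis form of $\overline{A_t}$ guaranteed by the non-degenerate gap hypothesis.
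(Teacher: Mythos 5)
There is a fundamental problem with your choice of observable. You set $A\approx P$, a spectral projector of $\ZZ$. But $[H,\ZZ]=0$ forces $P_t=P$ for all $t$, and the very perturbation estimate you invoke in your first step shows that $P$ is \emph{approximately local}: the gap $\gamma$ gives $\|P-P^l\|\leq 2g(l)/\gamma$ with $P^l$ strictly supported on $X_l$, hence $\|P-\Gamma_S(P)\|\leq \|P-P^l\|+\|\Gamma_S(P^l)-\Gamma_S(P)\|\leq 4g(l)/\gamma$ for every $S\supseteq X_l$. So $\overline{\|A_t-\Gamma_S(A_t)\|}$ is \emph{small} for your $A$, not close to $1$: a conserved, quasi-local projector is precisely the kind of operator that does not propagate. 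Correspondingly, your "technical heart", the claimed bound $\|P-\Gamma_S(P)\|\geq 1-d_s/(2\dminw^{1/2})$, is false. Exponentially large rank does not imply non-locality: for $P=\ketbra{0}{0}_1\otimes\Id_{\{2,\dots,N\}}$ one has $\mathrm{rank}(P)=d^{N-1}$ yet $\Gamma_S(P)=P$ for any $S\ni 1$. Your Page-curve computation goes wrong at the last step: for $\ket{\psi}$ Haar-random in $\mathrm{Range}(P)$ the reduction $\rho_S(\psi)$ concentrates around $\tr_{S^c}(P)/\mathrm{rank}(P)$, and $\tr\bigl(Q_S\,\tr_{S^c}(P)\bigr)/\mathrm{rank}(P)$ is close to $1$ (not to $0$) whenever $P$ is close to an $X_l$-supported projector, which it is here.

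The paper's proof takes the opposite tack, and the difference is essential. The local operator $A$ is built to be block-\emph{off}-diagonal with respect to the eigenprojectors of (the truncation of) $\ZZ$ --- a flip operator $A=\sum_r\ketbra{k=0,r}{k=1,r}+\ketbra{k=1,r}{k=0,r}$ between two eigenspaces --- so that $\tr(A\,\omega)\leq 4g(l)/\gamma$ for the infinite-time average $\omega$, while a specially constructed initial state $\ket{\psi}=(\ket{v}+A\ket{v})/\sqrt{2}$, with $\ket{v}$ the flat superposition over the smaller eigenspace of the \emph{untruncated} $\ZZ$, satisfies $\bra{\psi}A\ket{\psi}\geq 1-9g(l)/\gamma$ and has effective dimension at least $\dminw$. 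The non-degenerate gaps of $H$ then enter through the equilibration theorem of Linden et al., $\overline{\|\rho^S_{-t}-\omega^S\|_1}\leq d_{\mathrm{sys}}/(2\,d_{\mathrm{eff}}^{1/2})$, which is the actual source of the $d_s/(2\dminw^{1/2})$ term; since $A$ has expectation near $1$ initially but near $0$ on the equilibrated state, $A_t$ must on average escape every finite region. In your proposal the time average and the equilibration mechanism play no role at all (your $A_t$ is essentially constant), which is the structural reason the argument cannot be repaired without changing the choice of $A$.
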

  The first non-constant term in Eq.~\eqref{eq:def_transp} can be chosen arbitrarily small by picking the initial support $X_l$
  large enough and the second term decays exponentially with system size $L$, due to the
  growth of the degeneracy $\dminw$.
  This result shows that for many-body localising Hamiltonians, a zero velocity
  Lieb-Robinson bound does not occur and it is always possible to use the system to send information.
  Our results do not provide any statement on the influence of the involved energy scale and are perfectly
  compatible with the existence of a dynamical mobility edge, in the sense of a zero-velocity Lieb-Robinson
  bound for low energy states \cite{1409.1252,1411.0660}.
  Let it be noted that while MBL systems provably allow for information propagation, we expect that they
  do not exhibit particle or energy transport.

\textit{Proof idea: Equilibration implies information propagation.}
  Our results only rely on the existence of an approximately local constant of motion and assume no specific
  structure of the Hamiltonian.
  In order to first present the argument in its simplest form, however, we will use the following MBL model
  \begin{align}
    \label{H_huse}
    H = \sum_j h_j \sigma^z_j + \sum_{i,j} J_{i,j} \sigma^z_i \sigma^z_j,
  \end{align}
  where $\sigma^z_j$ are the local Pauli-Z-matrices and $J_{i,j} \in \rr$ decays exponentially with the distance between the spins.

  To carry out the proof in this simplified setting, we construct two objects.
  Firstly, a state that is the equal superposition of
  all eigenstates of the Hamiltonian and secondly, a local operator $A$ that initially has expectation value one with respect to this state,
  but at the same time has a zero diagonal in the energy eigenbasis and thus zero expectation value
  for the equilibrated infinite time average.
  Since equilibration guarantees that local expectation values are described by the infinite time average,
  this will allow us to conclude that the Heisenberg evolution of the operator $A$ has to be non-local.
  \begin{lm}[Diagonal Hamiltonians]
    \label{main_thm}
    Let $H$ be a diagonal Hamiltonian on a spin-1/2 lattice with non-degenerate eigenvalues and gaps.
    Let $A = \sigma^x_j$ be the Pauli-$X$-matrix supported on spin $j$.
    Then $H$ necessarily has information propagation on average in the sense
    that the operator $A_{t}$ has, on average, support outside any finite region $S$
    \begin{align}
      \overline{\| A_{t} - \Gamma_S \left(A_{t} \right) \|} \geq 1 - d^{|S| - N/2}.
    \end{align}
  \end{lm}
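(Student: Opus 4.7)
My plan is to exploit the fact that $H$ is diagonal in the computational basis $\{|k\rangle\}$, which lets me write $A_t$ in closed form. Since $A=\sigma^x_j=\sum_k|k\oplus e_j\rangle\langle k|$ and $H|k\rangle = E_k|k\rangle$, Heisenberg evolution gives
\[ A_t = \sum_k e^{i\omega_k t}\,|k\oplus e_j\rangle\langle k|,\qquad \omega_k := E_{k\oplus e_j}-E_k. \]
As $A_t$ is a unitary conjugate of $\sigma^x_j$, we have $\|A_t\|=1$ for every $t$. Combined with the reverse triangle inequality $\|A_t-\Gamma_S(A_t)\|\geq 1-\|\Gamma_S(A_t)\|$, the problem is reduced to producing a good upper bound on $\overline{\|\Gamma_S(A_t)\|}$.

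To that end, I would split indices as $k=(k_S,k_{S^c})$ and, WLOG, assume $j\in S$. The partial trace over $S^c$ collects all the $S^c$-phases into coefficients $c_{k_S}(t):=\sum_{k_{S^c}}e^{i\omega_{(k_S,k_{S^c})}t}$, so
\[ \Gamma_S(A_t) = \frac{\Id_{S^c}}{d^{|S^c|}}\otimes M_S(t),\qquad M_S(t)=\sum_{k_S}c_{k_S}(t)\,|k_S\oplus e_j\rangle\langle k_S|. \]
A direct computation of $M_S(t)^*M_S(t)$ shows that the singular values of $M_S(t)$ are exactly $\{|c_{k_S}(t)|\}_{k_S}$, so $\|\Gamma_S(A_t)\|=d^{-|S^c|}\max_{k_S}|c_{k_S}(t)|$, which I would bound by the corresponding Frobenius quantity $d^{-|S^c|}\bigl(\sum_{k_S}|c_{k_S}(t)|^2\bigr)^{1/2}$.

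The final step is to take the time average and exploit the non-degenerate-gap hypothesis. By Jensen's inequality I can pull the square root out of the average, and expanding
\[ |c_{k_S}(t)|^2=\sum_{k_{S^c},k'_{S^c}}e^{i(\omega_{(k_S,k_{S^c})}-\omega_{(k_S,k'_{S^c})})t} \]
the non-degenerate-gap assumption kills every off-diagonal term, leaving $\overline{|c_{k_S}(t)|^2}=d^{|S^c|}$. Summing over $k_S$ then yields
\[ \overline{\|\Gamma_S(A_t)\|}\leq d^{-|S^c|}\sqrt{d^{|S|}\,d^{|S^c|}}=d^{|S|-N/2}, \]
which together with the reverse triangle inequality is precisely the claim. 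I expect the only genuinely technical step to be this averaging: one has to check that for $k_{S^c}\neq k'_{S^c}$ the four indices $(k_S\oplus e_j,k_{S^c})$, $(k_S,k_{S^c})$, $(k_S\oplus e_j,k'_{S^c})$, $(k_S,k'_{S^c})$ force a non-trivial equality of gaps that is excluded by the hypothesis — routine index bookkeeping, but the only place where the spectral assumption does real work.
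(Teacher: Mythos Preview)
Your argument is correct: the explicit form of $A_t$, the singular-value computation for $M_S(t)$, the Jensen step, and the gap bookkeeping all go through. The WLOG is also fine, since for $j\notin S$ one has $\Gamma_S(A_t)=0$ and the bound is trivial.

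However, your route is genuinely different from the paper's. The paper never writes $A_t$ explicitly; instead it introduces the product state $\ket{\psi}=\ket{+}^{\otimes N}$, observes that its infinite-time average is the maximally mixed state, and then uses the identity $1=\tr(A_t\rho_{-t})$ together with the equilibration bound $\overline{\|\tr_{S^c}(\rho_{-t}-d^{-N}\Id)\|_1}\le d^{|S|-N/2}$ of Linden et al.\ to control $\overline{\|A_t-\Gamma_S(A_t)\|}$ from below via duality. Your approach is more elementary and self-contained: you bound $\|\Gamma_S(A_t)\|$ directly and never invoke a state or an external equilibration result. In effect you are re-deriving the relevant piece of the equilibration bound by hand in this special setting. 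What the paper's approach buys is generality: because it only uses (i) an observable with vanishing diagonal in the energy eigenbasis and (ii) a state with large effective dimension, the same template extends with minor modifications to Hamiltonians that are not diagonal but merely possess an (approximately) local constant of motion---which is the main theorem. Your closed-form computation, by contrast, leans on knowing the full eigenbasis and would not survive that generalisation without essentially rebuilding the paper's machinery.
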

  \begin{proof}
  From the concrete form of the Hamiltonian in Eq. \eqref{H_huse}, we could
  calculate the time evolution of the $X$-operator and see that it acquires strings of $Z$-operators
  that sooner or later extend over the whole chain. In the following, we will show how this spreading
  behaviour can be derived when no specific Hamiltonian structure is used.
  For the argument, we use a state vector $\ket{\psi}$ that is initially a product with $\ket{+}$ on all sites,
  which also is the equal superposition of all eigenstates of the system.
  Since we assume that the Hamiltonian has non-degenerate energies, we know that
  the infinite time average of $\rho = \ketbra{\psi}{\psi}$ is diagonal, since all off-diagonal
  elements correspond to non-zero energy gaps and are dephased away.
  Moreover, as the diagonal is invariant under the time evolution, the time-averaged state $\omega$
  is the normalised identity matrix.
  Considering a subsystem $S$, we can use the non-degenerate energy gaps to employ
  known equilibration results \cite{Linden_etal09} for the expected deviation from the time average
  \begin{align}
    \label{eq_equil}
    \overline{\| \tr_{S^c} (\rho_{t} - d^{-N} \Id ) \|_1} \leq \frac{d_{\text{sys}}}{2 {d_{\text{eff}}}^{1/2}} \leq d^{|S| - N/2}.
  \end{align}
  Here $N$ is the total number of spins and the effective dimension counts how many eigenstates of the Hamiltonian
  are part of the state
  \begin{align}
    d_{\text{eff}} = \left(\sum_k |\braket{k}{\psi}|^4\right)^{-1}.
  \end{align}
  The above result states that, for most times, the reduced state of $\rho_t$ looks like the identity.
  Due to the way equilibration is proven, the results also applies to the inverse evolution $\rho_{-t}$
  \cite{Linden_etal09}.

  To investigate information propagates under the Hamiltonian, we look at the time evolution of an observable $A$
  consisting of a single Pauli-$X$-operator somewhere in the region $S$.
  The key trick is to use the initial expectation value and to insert time evolution operators
  \begin{align}\label{eq:tr_one}
    1 = \tr ( A_0 \rho_0 )= \tr (A_t \rho_{-t} ) .
  \end{align}
  Since we know that the equilibrated state is the normalised identity,
  the expectation value of any local traceless operator $B$ has to vanish on average
  \begin{align}
    \overline {\tr (B \rho_{-t})} = 0.
  \end{align}
  Since $A_0$ is traceless and the time-evolution leaves the trace invariant,
  the operator $A_{t}$ on average cannot be local anymore.
  More precisely, we have that
  \begin{align}
    &\overline{\|A_{t} - \Gamma_S(A_{t})\|}\\
    &= \overline{\| A_{t} - d^{|S|-N} \Id_{S^c} \otimes \tr_{S^c} (A_{t}) \|}\\
    &\geq \overline{|\tr (A_{t} \rho_{-t})  - d^{|S|-N} \tr_S \left(\rho^S_{-t} \tr_{S^c} (A_{t})\right)|},
  \end{align}
  where we have used that $|\tr{(A \rho)}|\leq \|A\| \| \rho\|_1$ and have defined
  $\rho^S = \tr_{S^c} (\rho)$. Next we use the inverse triangle inequality, Eq.~\eqref{eq:tr_one} and
  insert $0 = \tr (d^{-|S|} \Id_S \tr_{S^c}(A)) $ which is using the fact that
  the reduced observable has zero expectation value with the infinite time average
  \begin{align}
    &\overline{|\tr (A_{t} \rho_{-t})  - \tr_S \left(\rho^S_{-t} d^{|S|-N} \tr_{S^c} (A_{t})\right)|}\\
    &\geq 1 - \overline{|\tr_S \left(\rho^S_{-t} d^{|S|-N} \tr_{S^c} (A_{t})\right)|}\\
    &\geq 1 - \overline{|\tr_S \left( (\rho^S_{-t} - d^{-|S|} \Id_S ) d^{|S|-N} \tr_{S^c} (A_{t})\right)|} \; .
  \end{align}
  Another application of $|\tr{(A \rho)}|\leq \|A\| \| \rho\|_1$ allows
  to use the equilibration results discussed previously.
  Using $\|d^{|S|-N} \tr_{S^c} (A_{t})\| \leq 1$ concludes the estimate
  \begin{align}
    &\overline{\|A_{t} - \Gamma_S(A_{t})\|}\\
    &\geq 1  - \| \rho^S_{-t} - d^{-|S|} \Id_S \|_1 \|d^{|S|-N} \tr_{S^c} (A_{t})\| \\
    &\geq 1 - d^{|S|-N/2} \;.
  \end{align}
  \end{proof}

  The above proof explicitly establishes a recently proposed connection between propagation
  and equilibration \cite{gogolin,chrisRev}.
  Thus, the assumption of non-degenerate energy gaps is only needed to guarantee equilibration,
  which means that the condition can be substantially relaxed \cite{PhysRevE.90.012121}.

  The main idea for the proof still can be carried out in the setting where the Hamiltonian is no longer
  assumed to be diagonal, but where only the existence of an approximately local constant of motion is guaranteed.
  For this, it is first assumed that the constant of motion is exactly local. This implies that it
  is possible to distinguish different sets of eigenvectors locally and thus allows to
  construct local observables that have zero diagonal in the eigenbasis of the Hamiltonian.
  Moreover, a state with large expectation value with respect to this observable can be constructed,
  again allowing to use equilibration results, together with the off-diagonality
  of the observable in order to prove information propagation.
  Finally, a perturbation analysis extends the argument to approximately local constants of motion ec.~\textbf{C} of SM \cite{supplement}).

\textit{Discussion \& outlook.}
  In this work, we have shown that for systems with suitably non-degenerate spectrum a single
  approximately local constant of motion is sufficient to rigorously derive information propagation.
  We explicitly construct local excitation operators whose effect spreads over arbitrary distances,
  thus giving rise to a protocol for using MBL systems for signalling.
  This implies that the recently derived logarithmic light cone \cite{1412.3073}
  can never be tightened to a zero-velocity Lieb-Robinson bound, at least if one allows for infinite energy
  in the system. These results strengthen and are concomitant with the observation of a logarithmic
  growth of entanglement entropies in many-body localised models.

  As future work, it would be interesting to explore the speed of the information propagation,
  which naturally is linked to the open problem of deriving physically meaningful time scales of equilibration in local models.
  Another important question is how the information propagation derived above is linked to the available energy
  scale in the system. In particular, it would be interesting how our results relate to the
  possibility of having a mobility edge and how they are connected to the presumed suppression of
  energy and particle transport in MBL systems.

{\it Acknowledgements.}
  We thank Christian Gogolin for many interesting discussions on possible links between equilibration and transport and
  Henrik Wilming and Tobias Osborne for numerous discussions on many-body localisation.
  We are grateful for support by the EU (SIQS, RAQUEL, AQuS, COST), the ERC (TAQ),  the BMBF, and the Studienstiftung des
  Deutschen Volkes.
  WB is supported by the EPSRC.
\bibliographystyle{naturemag}

\section*{Supplemental Material}
\subsection{Local systems}
  In this appendix, we review some basic definitions for local quantum many-body systems.
  We  work with a finite lattice $\Lambda$ with $d$-dimensional spin systems attached to each vertex of the lattice.
  We look at local regions $X$ and denote by $|X|$ the number of spins contained in such a region.
  For any fixed set $X$, we will introduce enlarged sets $X_l$ that contain
  $X$ as well as all sites within distance $l$ of $X$.
  The distance measure will be the Manhatten metric, such that distances are always natural numbers.
  Local reductions of an observable will be performed by a map
  \begin{align}
    \Gamma_{S} (A) := d^{-|S^c|} \Id_{S^c} \otimes \tr_{S^c} A,
  \end{align}
  where $S^c$ denotes the complement of $S$.
  Due to the duality of operator and trace norm, we have
  \begin{align}
    \| \Gamma_S (A) \| \leq \|A\|
  \end{align}
  for any region $S$.
  In the following, we introduce local operators and unitaries.
  \begin{df}[Local observable]
    \label{def_local_observable}
    An operator $A$ will be called $(g,X)$-local, if there exists a finite localisation region $X$
    such that
    \begin{align}
      \|A - \Gamma_{X_l} (A) \| \leq \| A \| g(l)
    \end{align}
    for some function $g:\nn\rightarrow \rr$ with suitable decay in $l$.
  \end{df}
  \begin{df}[Local unitary]
    \label{def_local_unitary}
    A unitary operator $U$ will be called $f$-local, if the conjugation of a local observable $A$ with localization region $X$ remains
    local in the sense that
    \begin{align}
      \|U A U^\dagger - \Gamma_{X_l} (U A U^\dagger) \| \leq \| A \| f(l)
    \end{align}
    for some function $f:\nn\rightarrow \rr$ with suitable decay in $l$.
  \end{df}
  Correspondingly, we will say that a Hamiltonian has $f$-local eigenvectors, if the unitary diagonalising
  it is $f$-local.

  The main focus of this work is to investigate the propagation of local operators.
  We work with the following definition of information propagation that captures spreading of support in the Heisenberg picture.
  \setcounter{df}{1}
  \begin{df}[Information propagation on average]
    A Hamiltonian allows for information propagation on average,
    if for any $\epsilon>0$, there exists a strictly local observable $A$ with unit operator norm,
    such that, for each finite region $S$, truncation of the Heisenberg evolution to that finite region necessarily
    leads to a fixed error $\varepsilon$, as long as the system size $L$ is large enough
    \begin{align}
      \forall \epsilon > 0 \; &\exists A \; \mathrm{ local } \; \forall S \; \exists L_0 \forall L > L_0: \\
                           &\overline{\| A_{t} - \Gamma_S (A_{t})\|} \geq 1 - \epsilon .
    \end{align}
    Here $\overline{A_t} = \lim_{T \rightarrow \infty} \frac{1}{T} \int_0^T \mathrm{d}t A_t$ denotes the infinite time average of $A$.
  \end{df}
  \setcounter{df}{4}
  The form of the estimate is well known for Lieb-Robinson bounds that provide
  an upper bound for the above quantity depending on time
  \cite{liebrobinson,LRreviewchapter}. In this way, our definition can be seen as a
  \enquote{lower Lieb-Robinson bound}, albeit without any information on the corresponding time scale.
  Note that the above information propagation on average naturally also leads to information propagation
  for a large fraction of time maybe with respect to an adjusted error $\varepsilon$.

  As in the case of Lieb-Robinson bounds \cite{quant-ph/0603121}, this
  definition can be connected to information propagation in the lattice.
  In particular, we can rewrite the reduction map $\Gamma$ using an integration over unitaries drawn from
  the Haar measure
  \begin{align}
    \Gamma_S(A) = \int_{S^c} dU U A U^\dagger,
  \end{align}
  which gives
  \begin{align}
    \int_{S^c} dU \overline{\|[U, A] U^\dagger \|}
    &\geq
    \overline{\| \int_{S^c} dU [U, A] U^\dagger \|}\\
    &= \overline{\| A - \Gamma_S (A)\|} \geq 1 - \epsilon .
  \end{align}
  Thus for systems showing information propagation according to Def.~$2$ in the main text,
  we can find a state vector $\ket{\psi}$ and such that a local excitation created by some local unitary $V$ will be,
  on average, detectable at distances arbitrary far away
  \begin{align}
    \overline{\bra{\psi} V A_t V^\dagger \ket{\psi} - \bra{\psi} A_t \ket{\psi}}
    &= \overline{\bra{\psi} [V, A_t] V^\dagger \ket{\psi}} \\
    &> 1 - \epsilon.
  \end{align}
\subsection{Simple MBL model implies information propagation}
  \label{app_simple}
  In this appendix, we prove information propagation in two simple MBL models,
  one where the Hamiltonian is diagonal in the computational basis and one
  where the eigenstates are deformed by a f-local unitary (Definition \ref{def_local_unitary}).
  We start with the following simple model
  \begin{align}
    \label{H_huse}
    H = \sum_j h_i \sigma^z_i + \sum_{i,j} J_{i,j} \sigma^z_i \sigma^z_j,
  \end{align}
  which consists of interacting Pauli-Z-matrices with suitably random coefficients $J_{i,j}$
  that decay with the distance between sites $i$ and $j$.
  In this system, the eigenvectors are the computational basis and the energies and gaps will
  be non-degenerate due to the randomness in the model.
  For local dimension $d>2$, one can easily extend the model, by allowing for coupling with
  arbitrary local and diagonal matrices.
  In this case, we further require a generalised Pauli-X-matrix on site $j$ defined
  via matrix elements
  \begin{align}
    (\tilde{\sigma}^x_j)_{r,s} = \frac{1 - \delta_{r,s}}{d-1}.
  \end{align}
  As discussed in the main text, this model implies information propagation in the following way.
  \setcounter{lm}{0}
  \begin{lm}[Diagonal Hamiltonians]
    \label{main_thm}
    Let $H$ be a diagonal Hamiltonian with non-degenerate eigenvalues and gaps.
    Let $A = \tilde{\sigma}^x_j$ be the generalised Pauli-$X$-matrix supported on spin $j$.
    Then $H$ necessarily has information propagation on average in the sense
    that the operator $A_{t}$ has, on average, support outside any finite region $S$
    \begin{align}
      \overline{\| A_{t} - \Gamma_S \left(A_{t} \right) \|} \geq 1 - d^{|S| - N/2}.
    \end{align}
  \end{lm}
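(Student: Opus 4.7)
The plan is to follow exactly the equilibration-based strategy already sketched in the main text, but to verify that the key algebraic identities survive the replacement of the spin-$1/2$ Pauli-$X$ by the generalised $\tilde{\sigma}^x_j$ with entries $(1-\delta_{r,s})/(d-1)$. The two quantitative ingredients I need are the initial-expectation identity $\tr(A_0 \rho_0)=1$ and the Linden--Popescu--Short--Winter equilibration bound for $\rho_{-t}$; everything else is just triangle-inequality bookkeeping.

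First I would fix the probe state $\ket{\psi}=\ket{+}^{\otimes N}$ with $\ket{+}=d^{-1/2}\sum_k\ket{k}$, whose single-site reduced state has every matrix element equal to $1/d$. A direct computation then gives
\begin{align}
\tr\bigl(\tilde{\sigma}^x_j\,\ketbra{+}{+}\bigr)
=\frac{1}{d(d-1)}\sum_{r\neq s}1 = 1,
\end{align}
so $\tr(A_0\rho_0)=1$ exactly as in the qubit case. Because $H$ is diagonal in the computational basis and has non-degenerate eigenvalues, $\ket{\psi}$ is the equal superposition of all $d^N$ eigenvectors, so $d_{\mathrm{eff}}=d^N$ and the infinite time average of $\rho_0$ is $\Id/d^N$. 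Since the equilibration argument is symmetric under $t\mapsto -t$, applying the LPSW bound to $\rho_{-t}$ yields
\begin{align}
\overline{\bigl\|\rho^S_{-t}-d^{-|S|}\Id_S\bigr\|_1}
\leq \frac{d_{\mathrm{sys}}}{2\,d_{\mathrm{eff}}^{1/2}}
\leq d^{\,|S|-N/2}.
\end{align}

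Next I would reproduce the chain of estimates from the main text. The invariance of the trace under unitary conjugation gives the key identity $\tr(A_t\rho_{-t})=\tr(A_0\rho_0)=1$, while $\tr(\tilde{\sigma}^x_j)=0$ implies $\tr(\Gamma_S(A_t)\,\Id/d^N)=0$. Writing $\Gamma_S(A_t)=d^{|S|-N}\Id_{S^c}\otimes\tr_{S^c}(A_t)$ with the normalisation that preserves operator norm, inserting this zero, using the inverse triangle inequality, and invoking $|\tr(XY)|\leq\|X\|\,\|Y\|_1$ together with $\|\Gamma_S(A_t)\|\leq\|A_t\|\leq 1$, I would arrive at
\begin{align}
\overline{\|A_t-\Gamma_S(A_t)\|}
\geq 1 - \overline{\bigl\|\rho^S_{-t}-d^{-|S|}\Id_S\bigr\|_1},
\end{align}
which combined with the equilibration bound above gives the claimed $1 - d^{|S|-N/2}$.

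I do not expect any substantive obstacle: the generalised Pauli-$X$ was tailored precisely so that $\bra{+}\tilde{\sigma}^x\ket{+}=1$ and $\tr\tilde{\sigma}^x=0$, the two properties that the qubit proof relied on. The only mildly non-routine step is the normalisation bookkeeping in $\Gamma_S$, which is responsible for the $d^{|S|}$ factor in the exponent and must be carried consistently through the three applications of Hölder's inequality $|\tr(XY)|\leq\|X\|\,\|Y\|_1$. Once that is tracked, the argument is identical in structure to the $d=2$ proof in the main text.
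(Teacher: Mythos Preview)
Your proposal is correct and follows exactly the same equilibration-based argument as the paper's proof in the main text; the supplemental material states outright that ``the proof is contained in the main text and directly carries over to the case with local dimension $d>2$,'' and your verification that $\bra{+}\tilde{\sigma}^x\ket{+}=1$ and $\tr\tilde{\sigma}^x=0$ is precisely the check needed for that carry-over.
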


  The proof is contained in the main text and directly carries over to the case with local dimension $d>2$.
  Here, we will extent this result and show that
  the same construction can still be carried out in the case of approximately local eigenvectors,
  which are obtained from the computational basis by a joint $f$-local unitary
  (Def. \ref{def_local_unitary}).
  \setcounter{thm}{1}
  \begin{co}[$f$-local eigenvectors imply information propagation]
    \label{co_f}
    Let $H$ be a Hamiltonian with $f$-local eigenvectors and non-degenerate energies
    and gaps.
    Then $H$ necessarily has information propagation on average in the sense that for any fixed finite region
    $X_l$ of diameter $l$,
    there exists a local operator $A$ initially supported on $X_l$ with $\|A\|=1$
    such that $A_{t}$ has, on average, support outside any finite region $S$
    \begin{align}
      \overline{\| A_{t} - \Gamma_{S} \left(A_{t} \right) \|} \geq 1 - d^{|S| - N/2} - 2 f(l).
    \end{align}
  \end{co}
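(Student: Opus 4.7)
The plan is to reduce the $f$-local eigenvector case to the diagonal setting of Lemma~\ref{main_thm} by conjugating both the initial state and the observable with the diagonalising unitary $U$. Writing $H = U\tilde{H}U^\dagger$ with $\tilde H$ diagonal in the computational basis, I would introduce the rotated state vector $\ket{\psi} := U\ket{+}^{\otimes N}$ and its density matrix $\rho = \ketbra{\psi}{\psi}$. Because $|\braket{k}{\psi}|^2 = d^{-N}$ for every energy eigenvector $\ket{k}$ (unitary invariance of the overlap under $U$), $\ket{\psi}$ is again the equal superposition of all energy eigenstates, its effective dimension equals $d^N$, and its infinite-time average is the maximally mixed state. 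Consequently the Linden--Popescu--Short--Winter bound used in Lemma~\ref{main_thm} applies without modification and yields $\overline{\|\rho^S_{-t} - d^{-|S|}\Id_S\|_1} \leq d^{|S|-N/2}$.

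The local observable is constructed by pushing the generalised Pauli-$X$ through $U$ and then truncating. Fix a site $j \in X_l$, set $B := U\tilde{\sigma}^x_j U^\dagger$, and note that $B$ is approximately $X_l$-local with $\|B - \Gamma_{X_l}(B)\| \leq f(l)$ by Definition~\ref{def_local_unitary} applied to the strictly local operator $\tilde{\sigma}^x_j$. The candidate observable is then $A := \Gamma_{X_l}(B)/\|\Gamma_{X_l}(B)\|$, which is strictly supported on $X_l$ with $\|A\|=1$. Three properties of $A$ drive the argument: (i) $\tr(A) = 0$, since $\Gamma_{X_l}$ preserves the trace and $\tr\tilde{\sigma}^x_j = 0$; (ii) $\tr(A\rho) \geq 1 - 2f(l)$, which follows because $\tr(B\rho) = \tr(\tilde{\sigma}^x_j\,\ketbra{+}{+}^{\otimes N}) = 1$ while both $|\tr(\Gamma_{X_l}(B)\rho) - \tr(B\rho)|$ and the deviation $1 - \|\Gamma_{X_l}(B)\|$ are individually bounded by $f(l)$; and (iii) $\|A_t\| = \|A\| = 1$ for all $t$.

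With these ingredients the proof of Lemma~\ref{main_thm} now transfers line by line. Using the duality bound $\|X\| \geq |\tr(X\rho_{-t})|$, the unitary invariance $\tr(A_t\rho_{-t}) = \tr(A\rho) \geq 1 - 2f(l)$, and property~(i) to cancel the $d^{-N}\tr(A_t)$ contribution, one arrives at
\begin{align}
  \overline{\|A_t - \Gamma_S(A_t)\|}
  \geq 1 - 2f(l) - \overline{\|\rho^S_{-t} - d^{-|S|}\Id_S\|_1}\,\|A_t\|,
\end{align}
after which the equilibration estimate closes the bound. The only genuinely new ingredient is property~(ii): it is the single place where the $f$-locality of $U$ must be upgraded to a quantitative locality statement about $B$, and the two $f(l)$ contributions there are precisely what produce the $2f(l)$ in the corollary. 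Beyond that bookkeeping step no additional perturbation-theoretic work is required, since all the locality information has already been packaged into Definition~\ref{def_local_unitary}.
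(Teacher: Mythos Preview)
Your proof is correct and follows essentially the same route as the paper: both conjugate $\tilde\sigma^x_j$ by the diagonalising unitary, truncate to $X_l$, use the rotated $\ket{+}^{\otimes N}$ state so that the equilibration bound of Lemma~\ref{main_thm} applies verbatim, and then repeat the duality/triangle-inequality chain. The only cosmetic difference is bookkeeping: you normalise the truncated observable and absorb both $f(l)$ losses into the initial expectation-value estimate (your property~(ii)), whereas the paper leaves the truncation unnormalised and picks up one $f(l)$ from each of the two terms in the splitting $|\tr(A^l_t\rho_{-t})| - |\tr(\Gamma_S(A^l_t)\rho_{-t})|$, passing back to the untruncated $A$ for the trace-zero step; either way the same $1 - 2f(l) - d^{|S|-N/2}$ drops out.
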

  \begin{proof}
  We will now use Lemma \ref{main_thm} in order to provide a proof for Corollary \ref{co_f} .
  For this, we use that the Hamiltonian can be diagonalised by a $f$-local unitary $V$
  and work with the observable
  \begin{align}
    A = V \tilde{\sigma}^x_j V^\dagger ,
  \end{align}
  where $\tilde{\sigma}^x_j$ is the generalised Pauli-$X$-matrix on some spin $j$ within the set $S$.
  This operator will no longer be strictly local, but due to the $f$-locality of the unitary $V$,
  the operator can be truncated
  \begin{align}
    \label{eq:f_trunc}
    \|A - \Gamma_{X_l} (A) \| \leq f(l) \;,
  \end{align}
  where $X_l$ denotes the set that contains the inital support, namely site $j$ and all $l$-nearest neighbours.
  Here we used that the operator norm of $A$ is one.
  From this, we can use the local reduction $A^l=\Gamma_{X_l}(A)$ as the local operator that will display information propagation.
  We will further need the time evolution of this truncated operator $A^l_{t} = e^{i t H} A^l e^{-i t H}$,
  where we first truncate and then evolve it in time.
  Naturally the unitary time evolution does not change the norm difference.
  The proof relies on a series of triangle inequalities. First we use that for any state
  \begin{align}
    \label{eq:f_twoterms}
    \|A^l_{t} - \Gamma_{S} (A^l_{t})\|
    \geq |\tr (A^l_{t} \rho_{-t})| - |\tr \left(\Gamma_S(A^l_{t}) \rho_{-t}\right)|\;.
  \end{align}
  Next we look at the two terms separately
  \begin{align}
    \label{eq:f_1term}
    |\tr (A^l_{t} \rho_{-t})|=
    &|\tr \left(e^{i t H} \Gamma_{X_l}(A) e^{-i t H} \rho_{-t}\right)|\\
    \geq& |\tr (A_{t} \rho_{-t})| - \|A - \Gamma_{X_l} (A) \|\\
    \geq& |\tr (A_0 \rho_0)| - f(l) = 1 - f(l),
  \end{align}
  where we have inserted a zero term $\pm \tr (A_{t} \rho_{-t})$ and have used
  the above truncation estimate in Eq. \eqref{eq:f_trunc} .
  The other term can be estimated as follows
  \begin{align}
    |\tr \left(\Gamma_S(A^l_{t}) \rho_{-t}\right)| =
    &|\tr \left(\Gamma_S (e^{i t H} \Gamma_{X}(A) e^{-i t H}) \rho_{-t}\right)|\\
    \leq& \|\Gamma_S (e^{i t H} (\Gamma_{X}(A) - A) e^{-i t H}) \|\\
        &+ |\tr_S \left ( \Gamma_S (A_{t}) \tr_{S^c} (\rho_{-t}) \right)| \; .
  \end{align}
  Here we again inserted a zero term
  \begin{equation}
    \pm \tr_S (\Gamma_S (A_{t}) ) \tr_{S^c} (\rho_{-t})
  \end{equation}
  and used a norm estimate.
  To proceed, we insert one more zero term
  $\pm \tr_S \left(\Gamma_S (A_{t}) d^{-|S|} \Id_S\right) $ and use the triangle inequality,
  \begin{align}
    &|\tr \Gamma_S(A^l_{t}) \rho_{-t}|\\
    &\leq \|\Gamma_{X}(A) - A\|
    + \|\tr_{S^c} (\rho_{-t}) - d^{-|S|} \Id_S \|_1\\
    &+ |\tr_S \left(\Gamma_S (A_{t}) d^{-|S|} \Id_S \right)| ,
  \end{align}
  and use that $\Gamma$ is a norm contractive map.
  These three terms can now easily be bounded. The first is small
  due to the $f$-locality of the unitary $V$ involved in constructing $A$, see Eq. \ref{eq:f_trunc} .
  The second term becomes small, once the time average is taken, which allows us to use
  known equilibration results \cite{Linden_etal09}
  \begin{align}
    \overline{ |\tr_S \left(\Gamma_S (A_{t}) d^{-|S|} \Id \right)|} \leq& d^{|S|-N/2} .
  \end{align}
  The third term vanishes completely, since the observable $A$ has zeros on its diagonal.
  This completes the estimate of the second term in Eq. \eqref{eq:f_twoterms}
  \begin{align}
    \label{eq:f_2term}
    \overline{|\tr \left ( \Gamma_S(A^l_{t}) \rho_{-t} \right)|} \leq f(l) + d^{|S|-N/2} \; .
  \end{align}
  Patching the estimates in Eqs. \eqref{eq:f_1term} and \eqref{eq:f_2term} together concludes the proof
  \begin{align}
    \|A^l_{t} - \Gamma_{S} (A^l_{t})\|
    \geq 1 - 2 f(l) - d^{|S|-N/2}\; .
  \end{align}

  \end{proof}
  The above Hamiltonians are special instances of systems having local constants of motion.
  In the diagonal case, the constants of motion are simply the local Pauli-Z-matrices.
  Once they are deformed by a quasi-local unitary, exact locality is lost, but one still
  obtains a full set of approximately local constants of motion.
\subsection{Constants of motion imply information propagation}
  \label{app_constant}

  In case the Hamiltonian has exactly local constants of motion \cite{1410.0687}
  $\ZX$ supported on some region $X$, they can also be employed
  to obtain the operators $A$ in the above construction.
  For this, let us assume that
  \begin{align}
    \ZX = \sum_{k=1}^M \lambda_k P_k
  \end{align}
  with exactly local projectors $P_k$ supported on $X$
  and $M$ distinct eigenvalues.
  The goal is then to construct an operator that is block-off-diagonal with respect to the projectors
  $P_k$. For this, let $\dmin$ be the smallest dimension of the eigenspaces of $\ZX$, when viewed as
  a local operator.
  For the construction, we fix two eigenspaces of $\ZX$.
  The larger of the two is then truncated down to the dimension $\dtrunc$ of the smaller one.
  Note that the resulting dimension of both spaces is lower bounded by $\dmin$.
  In these subspaces, we further fix some basis labelled by two indices $\ket{k,r}$
  where $k$ labels the eigenspaces of $\ZX$ and $r$ the basis vectors
  in each of these subspaces.
  We will denote the eigenspaces by $k=0$ and $k=1$.
  The operator $A$ is constructed to be supported on the small region $X$ and taken to be
  the flip operator between the subspaces
  \begin{align}
    A = \sum_{r}^{\dtrunc} \ketbra{k=0,r}{k=1,r} + \ketbra{k=1,r}{k=0,r}.
  \end{align}
  The operator norm of this observable is one and we will proceed by constructing an initial state that
  is an eigenstate of $A$ to eigenvalue 1, but still has large effective dimension.
  For this, we pick the subspace with smaller dimension and take the equal superposition,
  denoted by $\ket{v}$ of all eigenvectors in this subspace.
  For this, it is crucial to choose the subspace with smaller dimension, as the truncation
  in general, is not aligned with the eigenstates of the global Hamiltonian.
  The number of eigenvectors in the untruncated subspace will be lower bounded by
  $\dminw = \dmin d^{N-|X|}$,
  which is simply the smallest eigenspace dimension of $\ZX$ when viewed as an
  operator on the full lattice.
  The initial state vector is then taken to be
  \begin{align}
    \ket{\psi} = \frac{1}{\sqrt{2}} (\ket{v} + A \ket{v}).
  \end{align}
  It is straightforward to check that this is indeed an eigenstate of $A$, since $A^2 \ket{v} = \ket{v}$.
  What is more, the state vector $\ket{\psi}$ has an effective dimension lower bounded by $\dminw$.
  This gives the following result.

  \setcounter{thm}{2}
  \begin{co}[Information propagation: strictly local constants]
    Let $H$ be a Hamiltonian with non-degenerate energies and gaps
    and $\ZX$ be a strictly local constant of motion supported on $X$, with eigenspaces with dimension larger than $\dmin$.
    Then $H$ necessarily has information propagation in the sense that for any finite region $S$ containing $X$
    there exists a local operator $A$ initially supported on $X$ with $\|A\|=1$
    such that $A_{t}$, on average, has support outside $S$
    \begin{align}
      \overline{\| A_{t} - \Gamma_S \left(A_{t} \right) \|}
      \geq 1 - \frac{d^{|S| + |X|/2 }}{{\dmin}^{1/2}} d^{-N/2}.
    \end{align}
  \end{co}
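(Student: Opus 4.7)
The plan is to follow the template of Lemma~\ref{main_thm} verbatim, with two substitutions: the trace-zero Pauli observable is replaced by the flip operator
\begin{align}
A=\sum_{r=1}^{\dtrunc}\ketbra{k{=}0,r}{k{=}1,r}+\ketbra{k{=}1,r}{k{=}0,r}
\end{align}
constructed above, and the starting vector $\ket{+}^{\otimes N}$ is replaced by $\ket\psi=(\ket v+A\ket v)/\sqrt 2$. The only structural fact I need beyond the diagonal case is that $A$ still has \emph{zero diagonal in the eigenbasis of} $H$. This is immediate from $[H,\ZX]=0$: the two operators can be jointly diagonalised, so every energy eigenvector lies inside one projector $P_k$ of $\ZX$. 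By construction $A$ exchanges the $k{=}0$ and $k{=}1$ blocks, so $P_kAP_k=0$ for every $k$, and consequently $\bra E A\ket E=0$ on each energy eigenvector. In particular $\tr(A)=0$.

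From here I will reuse the inequality chain from Lemma~\ref{main_thm} unchanged. Setting $\rho=\ketbra\psi\psi$, one has $\tr(A_0\rho_0)=\bra\psi A\ket\psi=1$ by the eigenvector property, and after time-averaging the off-diagonality of $A$ forces $\tr(A\,\overline{\rho_t})=0$. The same manipulations as in the main-text proof (inserting the time-evolution operators, using $\Gamma_S(A_t)=d^{|S|-N}\Id_{S^c}\otimes\tr_{S^c}(A_t)$, inserting the vanishing term $0=\tr(d^{-|S|}\Id_S\tr_{S^c}(A_t))$ which is zero because $\tr(A_t)=0$, and applying $|\tr(BC)|\le\|B\|\|C\|_1$ together with the reverse triangle inequality) yield
\begin{align}
\overline{\|A_t-\Gamma_S(A_t)\|}\geq 1-\overline{\|\rho^S_{-t}-d^{-|S|}\Id_S\|_1}.
\end{align}

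The equilibration term is controlled by the Linden--Popescu--Short--Winter estimate for non-degenerate gaps,
\begin{align}
\overline{\|\rho^S_{-t}-d^{-|S|}\Id_S\|_1}\leq \frac{d^{|S|}}{2\,d_{\mathrm{eff}}(\ket\psi)^{1/2}},
\end{align}
and it remains to plug in a lower bound on $d_{\mathrm{eff}}(\ket\psi)$. By the construction, $\ket v$ is a uniform superposition over a basis of the smaller, \emph{untruncated}, local $\ZX$-block; globally this block has dimension at least $\dminw=\dmin\,d^{N-|X|}$, so $\ket\psi$ is a uniform superposition over at least $2\dminw$ mutually orthogonal vectors distributed between two distinct global $\ZX$-eigenspaces. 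A direct calculation then gives $d_{\mathrm{eff}}(\ket\psi)\ge\dminw$, and substitution produces the claimed
\begin{align}
\overline{\|A_t-\Gamma_S(A_t)\|}\geq 1-\frac{d^{|S|+|X|/2}}{\dmin^{1/2}}\,d^{-N/2}.
\end{align}

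The step I expect to be the main obstacle is precisely this $d_{\mathrm{eff}}$ lower bound. The truncation used to define $A$ is performed in an arbitrary local basis of the larger $\ZX$-block and is generically \emph{not} aligned with the global energy eigenbasis inside that block, so one has to argue that $\ket\psi$ does not accidentally concentrate on a small number of energy eigenvectors. As the construction preamble emphasises, this is exactly why $\ket v$ is placed in the smaller block (where no truncation is performed): the full basis of a global $\ZX$-eigenspace then appears with equal weight in $\ket v$, and the effective-dimension estimate follows cleanly. All remaining steps are routine adaptations of the arguments of Lemma~\ref{main_thm} and Corollary~\ref{co_f}.
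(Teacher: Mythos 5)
Your proposal follows essentially the same route as the paper: the paper's proof of this corollary simply reruns the argument of Lemma~\ref{main_thm} with the flip operator $A$ and the state $\ket{\psi}=(\ket{v}+A\ket{v})/\sqrt{2}$, invokes the equilibration bound of Linden et al., and inserts $d_{\mathrm{eff}}\geq\dminw=\dmin d^{N-|X|}$ and $d_{\mathrm{sys}}=d^{|S|}$, exactly as you do. The one caveat concerns your justification of the effective-dimension bound: $\ket{\psi}$ is \emph{not} a uniform superposition over $2\dminw$ mutually orthogonal energy eigenvectors, since $A\ket{v}$ is merely some unit vector in the $k=1$ block whose expansion in that block's energy eigenbasis is uncontrolled by the construction --- but the paper itself asserts $d_{\mathrm{eff}}(\ket{\psi})\geq\dminw$ without further argument, so on this point you match rather than fall short of the paper's own reasoning.
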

  \begin{proof}
    To prove this statement, we can directly follow the proof of Lemma
    \ref{main_thm}, as presented in the main text.
    Using the construction of the initial state and the observable $A$ described above,
    we immediately obtain
    \begin{align}
      \overline{\| A_{t} - \Gamma_S \left(A_{t} \right) \|}
      \geq 1 - \frac{d_{\text{sys}}}{{d_{\text{eff}}}^{1/2}} ,
    \end{align}
    from the same equilibration results as in the main text.
    Inserting the  effective dimension described above $d_{\text{eff}} \geq \dminw = \dmin d^{N-|X|}$
    and $d_\text{sys}=d^{|S|}$ concludes the proof.
  \end{proof}

  In many localising systems, one does not expect the constants of motion to be strictly local, but only
  approximately local \cite{1410.0687}, meaning that $\ZZ$ is a $(g,X)$-local operator in the
  sense of Def.~$1$ in the main text.
  Using perturbation theory, it follows that this bound is sufficient to obtain local approximations
  for the eigenprojectors and makes it possible to once again construct an observable $A$
  that propagates through the system.

  \setcounter{thm}{0}
  \begin{thm}[Information propagation]\label{cot}
    Let $H$ be a Hamiltonian with non-degenerate energies and gaps
    and $\ZZ$ be a approximately local constant of motion with localisation region $X$,
    decay function $g$ with spectral gap $\gamma>0$
    and eigenspaces with dimension larger than $\dminw$.
    Then $H$ necessarily has information propagation on average in the sense that
    there exists a local operator $A$ initially supported on $X_l\supset X$
    with $\|A\|=1$ such that $A_{t}$, on average, has support outside any finite region $S$
    \begin{align}
      \overline{\| A_{t} - \Gamma_{S} \left(A_{t} \right) \|}
      \geq 1 - 13 \frac{g(l)}{\gamma} - \frac{d_s}{2 {\dminw}^{1/2}} \; .
    \end{align}
  \end{thm}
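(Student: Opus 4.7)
\textit{Proof proposal.} The plan is to reduce to the strictly local setting of the preceding corollary by building a strictly local surrogate of $\ZZ$ on the enlarged region $X_l$ and using perturbation theory on the spectral projectors, paying an additive cost of order $g(l)/\gamma$ at each step. The structural proof from Lemma 1 and Corollary 2 then replays essentially unchanged, with the perturbation errors accumulating into the prefactor $13$.

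First I would set $\ZZ_l := \Gamma_{X_l}(\ZZ)$, which is strictly supported on $X_l$ and satisfies $\|\ZZ - \ZZ_l\| \leq g(l)$ by the approximate-locality assumption. A Weyl-type perturbation bound then shows that $\ZZ_l$ still has $M$ well-separated eigenvalue clusters as long as $g(l) < \gamma/2$. Applying the standard contour-integral representation of spectral projectors (Bhatia), the spectral projectors $P_k^{(l)}$ of $\ZZ_l$ are strictly supported on $X_l$ and approximate the true projectors $P_k$ of $\ZZ$ in the sense $\|P_k - P_k^{(l)}\| = O(g(l)/\gamma)$. The key asymmetry I would exploit is that the $P_k$ commute with $H$ (since $[H,\ZZ]=0$) and hence are block-diagonal in the energy eigenbasis, whereas the $P_k^{(l)}$ are strictly local on $X_l$.

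Next I would mirror the construction of Corollary 3, but define the flip observable $A$ between two eigenspaces $k \in \{0,1\}$ using the \emph{local} projectors $P_0^{(l)},P_1^{(l)}$, obtaining a strictly $X_l$-local operator with $\|A\|=1$. The auxiliary vector $\ket{v}$ is taken as the equal superposition over a basis of the smaller untruncated eigenspace of $\ZZ$ (on the whole lattice), and $\ket{\psi} = (\ket{v} + A\ket{v})/\sqrt{2}$. Two facts carry over: $\tr(A\rho_0)=1$ by construction, and the effective dimension of $\ket{\psi}$ with respect to $H$ is lower bounded by $\dminw$. The only approximate step is that the analogous flip $A_{\mathrm{true}}$ built from the true projectors $P_k$ is exactly block-off-diagonal in $H$ (so its diagonal in the energy basis vanishes), while the local version $A$ differs from it in operator norm by $O(g(l)/\gamma)$.

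Finally I would replay the triangle-inequality chain from the proof of Lemma 1 / Corollary 2, using at each insertion either the local-truncation bound $\|A - \Gamma_{X_l}(A)\|=0$ (which now holds exactly because $A$ was chosen strictly local on $X_l$) or the off-diagonality bound $\|A - A_{\mathrm{true}}\| = O(g(l)/\gamma)$ to bound the zero-insertion terms like $\tr_S(\Gamma_S(A_t)\,d^{-|S|}\Id_S)$. The $\rho_{-t}$-dependent terms are then absorbed by the equilibration estimate $\overline{\|\tr_{S^c}(\rho_{-t} - d^{-N}\Id)\|_1} \leq d_s/(2\dminw^{1/2})$, which uses the non-degeneracy of gaps and the effective-dimension lower bound. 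The main technical obstacle is the bookkeeping: pinning down the absolute constant in the contour-integral perturbation bound for $\|P_k - P_k^{(l)}\|$, and verifying that each of the (roughly four) triangle-inequality insertions produces only an additive $O(g(l)/\gamma)$ penalty, so that the constants add up to the stated factor $13$ and do not acquire any hidden dependence on $L$ or $|S|$.
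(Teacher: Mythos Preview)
Your overall strategy is the paper's strategy: truncate $\ZZ$ to $\ZZ_l=\Gamma_{X_l}(\ZZ)$, build the flip observable $A$ from the \emph{local} projectors $P_k^{(l)}$, build the state $\ket v$ from the \emph{full} projectors $P_k$, and bridge the mismatch with the projector perturbation bound $\|P_k-P_k^{(l)}\|\leq 2g(l)/\gamma$. The final chain of triangle inequalities and the equilibration input are also the same.

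There is, however, a genuine gap in your ``two facts carry over'' claim. The equality $\tr(A\rho_0)=1$ does \emph{not} hold by construction here. In the strictly local corollary it held because $A^2\ket v=\ket v$, which used that $\ket v$ lies in the range of the same projector $P_0$ from which $A$ is built. In the present setting $\ket v\in\mathrm{ran}\,P_0$ but $A^2=P_0^{(l)}+P_1^{(l)}|_I$, so $A^2\ket v\neq\ket v$; consequently $\ket\psi$ is not an eigenvector of $A$ and $\bra\psi A\ket\psi<1$. This is in fact where most of the constant $13$ comes from: the paper has to bound each of $\bra v A^2\ket v$, $|\bra v A\ket v|$ and $|\bra v A^3\ket v|$ separately using $\|P_k-P_k^{(l)}\|\le 2g(l)/\gamma$ and $\|P_k(\Id-P_k^{(l)})\|\le g(l)/\gamma$, arriving at $\bra\psi A\ket\psi\ge 1-9\,g(l)/\gamma$. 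Your proposed shortcut ``$\|A-A_{\mathrm{true}}\|=O(g(l)/\gamma)$'' does not directly repair this, because $\ket\psi$ itself depends on $A$ (through the $A\ket v$ summand), so perturbing $A$ perturbs both the observable and the state simultaneously; you cannot simply swap $A\mapsto A_{\mathrm{true}}$ inside $\bra\psi A\ket\psi$ at cost $\|A-A_{\mathrm{true}}\|$.

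The off-diagonal part you handle via $A_{\mathrm{true}}$ is done differently in the paper: instead of constructing a global comparison operator, they bound $\|P_kAP_k\|\le 2g(l)/\gamma$ directly by inserting $\Id=P_k^{(l)}+Q_k^{(l)}$ and using that $A$ is exactly block-off-diagonal with respect to the truncated projectors. This yields $|\tr(A\omega)|\le 4g(l)/\gamma$ without ever needing to align bases between the true and truncated eigenspaces, which your $A_{\mathrm{true}}$ route would require. Together, $9+4=13$.
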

  Let us remark that the first term can be chosen arbitrarily small by picking the initial support $X_l$
  large enough and the second term decays exponentially with system size $L$, due to the
  growth of the degeneracy $\dminw$.

  Note that the role of the eigenspace dimension $\dminw$ is different to the previous corollary.
  Here, we look at the constant of motion as an operator on the full Hilbert space.
  We assume that the number of eigenvalues $M$ and the spectral gap are independent of the system size.
  From this, perturbation theory can be used to show that the eigenspace dimension $\dminw$
  has to grow exponentially in the system size.
  \begin{proof}
    The first step of the proof is to show that the approximate locality of the constant of motion
    also implies quasi-local eigenprojectors.
    Let
    \begin{align}
      \ZZ = \sum_{k=1}^M \lambda_k P_k
    \end{align}
    and let $\gamma$ denote the smallest spectral gap.
    Due to locality, we can express $\ZZ$ for each fixed $l$, as
    \begin{align}
      \ZZ &= \Gamma_{X_l}(\ZZ) + V_{l},
     \end{align}
    with a bounded perturbation $V_{l}$ satisfying
    \begin{align}
      V_{l} &= \Gamma_{X_l}(\ZZ) - \ZZ,\\
      \|V_{l}\| &< g(l) \;.
    \end{align}
    Let $P_k^{l}$ be the eigenprojectors for the truncated observable.
    Perturbation theory assures us that the perturbed eigenspaces stay approximately orthogonal
    (Theorem VII.3.1 in Ref. \cite{Bhatia-Matrix})
    \begin{align}
      \label{eq_proj}
      \| P_k (\Id - P_k^{l}) \| \leq \frac{\|V_{l}\|}{\gamma} = \frac{g(l)}{\gamma} \;,
    \end{align}
    which also implies
    \begin{align}
      \label{eq_proj2}
      \| P_k  - P_k^{l} \| \leq \frac{2 \|V_{l}\|}{\gamma} = \frac{2 g(l)}{\gamma} \;.
    \end{align}
    Choosing the distance $l$ large enough such that the function $g$ becomes smaller than ${\gamma}/{2}$,
    we know that the perturbed and unperturbed eigenspaces have the same dimension \cite{Bhatia-Matrix}.
    This local approximation of the eigenprojectors of the constant of motion will be the basis for the construction of
    the observable $A$ as well as the initial state $\rho$.

    In order to construct the observable, we will work with the truncated constant of motion $\Gamma_{X_l}(\ZZ)$,
    fix two subspaces and construct the same flip operator as in the case of exactly local constants of
    motion
    \begin{align}
      A = \sum_{r}^{\dtrunc} \ketbra{k=0,r}{k=1,r} + \ketbra{k=1,r}{k=0,r}.
    \end{align}
    Without loss of generality, let $k=0$ be the space with smaller dimension and $k=1$ the one
    truncated to $\dtrunc$.
    Let $P_1^l|_I$ be the projector on the truncated subspace of $P_1^l$ corresponding to the image of $A$.

    For the initial state, we will use the corresponding subspaces, again labelled by $k=0,1$ of the
    full constant of motion $\ZZ$. Again we pick the smaller of the two subspaces and define $\ket{v}$ to be the equal superposition
    of all eigenstates within this space.
    The initial state vector is then
    \begin{align}
      \ket{\psi} = \frac{1}{\sqrt{2}} (\ket{v} + A \ket{v}) \; .
    \end{align}
    By construction, the effective dimension and the equilibration results will be as in the case of a strictly
    local constant of motion.

    Crucial in the above construction is that we use the truncated constant of motion
    $\Gamma_{X_l}(\ZZ)$  for the observable $A$ in order to ensure locality, while we use the full object $\ZZ$ for the initial
    state in order to achieve a large effective dimension.
    What remains to be shown is that despite this locality difference in the construction,
    we still achieve a large expectation value of $A$ with $\ket{\psi}$, but an almost vanishing expectation
    value with the infinite time average.

    As a first step, we will show that $A$ is almost block-off-diagonal with respect to the eigenprojectors
    of the full constant of motion $\ZZ$.
    Introducing the identity $\Id = P_k^{l} + Q_k^{l}$, this takes the following form
    \begin{align}
      \|P_k A P_k\| \leq \| P_k P_k^{l} A Q_k^{l} P_k \| + \| P_k Q_k^{l} A P_k^{l} P_k \|
      \leq 2 \frac{g(l)}{\gamma}.
    \end{align}
    Here we used that $A$ is block-off-diagonal with respect to the truncated constant of motion $\Gamma_{X_l}(\ZZ)$.
    The same estimate holds for the projectors $Q_k$.

    Using this, bounding the expectation value with the infinite time average is straightforward
    \begin{align}
      \label{eq:qiom_w}
      \tr (A \omega) &= \tr (A P_0 \omega P_0) + \tr (A Q_0 \omega Q_0)\\
      &\leq \| P_0 A P_0 \| + \| Q_0 A Q_0 \| \leq 4 \frac{g(l)}{\gamma}.
    \end{align}

    We now have to show that the expectation value of $A$ with $\rho$ is large initially
    \begin{align}
      \bra{\psi} A \ket{\psi}
      &\geq \bra{v} A A \ket{v} - \frac{1}{2} |\bra{v} A \ket{v} + \bra{v} A A A\ket{v} | \; .
    \end{align}
    In the following, we will show that the first term is almost one, while
    the other two almost vanish due to the block-off-diagonality.
    For the first term, we will use that $A^2 = P_0^l + P_1^l|_I$,
    where $P_1^l|_I$ is the projector onto the image of $A$ in $P_1^l$.
    Using that $\bra{v} Q \ket{v}$ can only increase if we enlarge the subspace of the projector
    $Q$, we obtain
    \begin{align}
      \bra{v} A A \ket{v} &\geq \bra{v} P_0^l \ket{v} - |\bra{v} Q_0^l \ket{v}|\\
      &\geq \bra{v} P_0 \ket{v} - |\bra{v} Q_0 \ket{v}| - 4 \frac{g(l)}{\gamma} \\
      &= 1 - 4 \frac{g(l)}{\gamma},
    \end{align}
    where we used \eqref{eq_proj2}.
    The second term can be bounded directly using block-off-diagonality
    \begin{align}
      |\bra{v} A \ket{v}| \leq \|P_0 A P_0\| \leq 2 \frac{g(l)}{\gamma}.
    \end{align}
    The last term, finally can be bounded as follows.
    \begin{align}
      |\bra{v} A A A \ket{v}|
      &= |\bra{v} P_0 A (P_0^l + P_1^l|_I)  P_0 \ket{v}|\\
      &\leq \| P_0 A P_0^l\|  + \| P_1^l P_0\|\\
      &\leq \|P_0^l A P_0^l\| + 2 \frac{g(l)}{\gamma} + \frac{g(l)}{\gamma}\\
      &\leq 3 \frac{g(l)}{\gamma} .
    \end{align}
    To summarise, we have
    \begin{align}
      \label{eq:qiom_psi}
      \bra{\psi} A \ket{\psi} \geq 1 - 9 \frac{g(l)}{\gamma}.
    \end{align}
    Putting together the estimates for the expectation value of $A$ with the initial state, the equilibration result and the expectation
    value of $A$ with the infinite time average, we obtain the desired bound.
    More precisely, we choose $\rho = \ketbra{\psi}{\psi}$ and proceed as follows
    \begin{align}
      &\overline{\| A_{t} - \Gamma_{S} \left(A_{t} \right) \|}\\
      &\geq \tr (A \rho) - \overline{|\tr \left( \Gamma_{S} \left(A_{t_0} \right) \rho_{-t} \right) |}\\
      &\geq \tr (A \rho) - \overline{\|w - \rho_{-t}\|_1} - |\tr \left( \Gamma_{S} \left(A_{t_0} \right) w \right)|\\
      &\geq \tr (A \rho) - \frac{d_{\text{sys}}}{2{d_{\text{eff}}}^{1/2}} - |\tr \left( \Gamma_{S} \left(A_{t_0} \right) w \right)| \; .
    \end{align}
    Inserting the effective dimension $d_{\text{eff}}=\dminw$ and using
    Eqs. \eqref{eq:qiom_w} and \eqref{eq:qiom_psi}
    concludes the proof
    \begin{align}
      &\overline{\| A_{t} - \Gamma_{S} \left(A_{t} \right) \|}\\
      &\geq 1 - 9 \frac{g(l)}{\gamma} - \frac{d_s}{2 {\dminw}^{1/2}} - 4 \frac{g(l)}{\gamma}\\
      &\geq 1 - 13 \frac{g(l)}{\gamma} - \frac{d_s}{2 {\dminw}^{1/2}}.
    \end{align}
  \end{proof}
\subsection{Approximate spectral tensor networks}
  \label{app_vidal}
  In Ref.\ \cite{1410.0687}, it is shown that if a Hamiltonian has suitable local constants of motion,
  then each eigenprojector can be efficiently represented as a matrix product operator.
  Moreover, it is rigorously derived that there exists an efficient spectral tensor network for all eigenprojectors
  at the same time.
  Ref.\ \cite{1410.0687} then proceeds to sketch the case of approximately local constant of motion,
  for which similar conclusions are reached.
  In this appendix, we show that indeed, even for approximately local constants of motion with robust spectrum
  (Def.~$1$ in the main text), one can rigorously obtain a spectral tensor network.

  \begin{res}[Efficient spectral tensor networks]
    Let $H$ be a Hamiltonian with an extensive number of approximately local constants of motion (Def.~$1$ in the main text) with
  $|X|\leq L$ and $g(l)\leq c_1 \exp(-c_2 l)$, for suitable constants $c_1,c_2>0$.
  We assume that the approximately local constants of motion are algebraically independent,
  commute with each other and have suitable distributed support on the lattice.
  Then there exists an efficient spectral tensor network representation for all eigenprojectors of $H$.
  \end{res}

  The proof of this statement directly follows from Ref.\ \cite{1410.0687}, together with our Corollary \ref{cot}.
  We start from the observation that the approximate locality of the constant of motion also implies quasi-local eigenprojectors, in the sense that
  \begin{equation}
    \|P_k - P_k^l\|\leq \frac{2 g(l)}{\gamma},
  \end{equation}
  and that the perturbed and unperturbed eigenspaces have the same dimension.
  Using this approximation, one finds that projectors onto an  eigenspace of an approximately local constant of motion
  can be efficiently approximated by matrix-product operators.
  For a given site $j$, call $A$ the subset of sites for which the MPO approximations have a support that includes $j$.
  With the same argument as in Ref.\ \cite{1410.0687},
  choosing a path in the supports
  of the strictly local constants of motion and performing singular value decompositions, as outlined in Ref.\ \cite{1410.0687},
  one finds that the collection of all approximately local constants of motion in $A$ can again be written as a matrix-product operator.
  The stability Lemma \ref{lm_stab} below concludes the proof.

  \begin{lm}[Stability]
    \label{lm_stab}
    Let $\{\mathcal{Z}_j\}$ be a set of $N$ approximately local constants of motion with a lower uniform bound $\gamma>0$ on their minimal spectral gaps and  uniform upper bounds $L$ and $g(l)$ on size and decay of their localisation regions $X_j$, such that
    \begin{align}
      \|\mathcal{Z}_j - \Gamma_{X_j^l \supset X_j}(\mathcal{Z}_j)\|\leq g(l)
    \end{align}
    for any $X_j^l$ containing $X_j$ together with a buffer region of size $l$. Then if $P_{j,m}$ denotes the eigenprojector of $\mathcal{Z}_j$ for eigenvalue $m$ we have
    \begin{align}
      \|P_{j_1,m_1}\cdots P_{j_N,m_n} - P^l_{j_1,m_1} \cdots  P^l_{j_N,N_1}\| \leq 2 N \frac{g(l)}{\gamma}
    \end{align}
    with $P^l_{j_i,m_i} = \Gamma_{X_j^l}(P_{j_i,m_i})$ being strictly local.
  \end{lm}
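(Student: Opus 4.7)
The plan is to reduce the $N$-fold product estimate to the single-projector perturbation bound already established in the proof of Theorem~\ref{cot}, via a one-line telescoping identity. Since both products in the claim are built from operators of norm at most one, this turns a multiplicative propagation of error into an additive one, producing the explicit factor $N$ in the conclusion.

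The first ingredient is the single-projector estimate. For each individual $j$ and $m$, the hypothesis $\|\mathcal{Z}_j - \Gamma_{X_j^l}(\mathcal{Z}_j)\|\leq g(l)$, combined with the uniform spectral gap $\gamma$ of $\mathcal{Z}_j$, feeds directly into the gap-based perturbation bound (Theorem VII.3.1 of \cite{Bhatia-Matrix}) exactly as in Eq.~\eqref{eq_proj2}, yielding
\begin{equation}
\|P_{j,m} - P^l_{j,m}\| \leq \frac{2 g(l)}{\gamma}.
\end{equation}
A small care is needed because in this lemma $P^l_{j,m}$ is defined as $\Gamma_{X_j^l}(P_{j,m})$ rather than as the eigenprojector of the truncated $\mathcal{Z}_j$; the two agree to order $g(l)/\gamma$ using the contractivity $\|\Gamma_{X_j^l}(A)\|\leq \|A\|$ together with the fact that $\Gamma_{X_j^l}$ acts as the identity on any operator already supported on $X_j^l$, so either convention delivers the same bound while remaining strictly local on $X_j^l$.

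The second ingredient is the standard telescoping identity
\begin{equation}
\prod_{i=1}^N A_i - \prod_{i=1}^N B_i = \sum_{k=1}^N \Bigl(\prod_{i<k} A_i\Bigr)(A_k - B_k)\Bigl(\prod_{i>k} B_i\Bigr),
\end{equation}
applied with $A_i = P_{j_i,m_i}$ and $B_i = P^l_{j_i,m_i}$. Submultiplicativity of the operator norm together with $\|P_{j,m}\|\leq 1$ and $\|P^l_{j,m}\|\leq 1$ bounds each of the $N$ summands by the single-projector estimate $2g(l)/\gamma$, and the triangle inequality assembles these into the claimed $2 N g(l)/\gamma$. There is no real obstacle here: the telescoping step is routine and the only substantive input is the single-projector perturbation bound, which is inherited verbatim from the proof of Theorem~\ref{cot}. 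The only items requiring attention are uniformity of $\gamma$ and $g$ across $j$, which is built into the hypotheses, and the observation that commutativity of the $\mathcal{Z}_j$ is not needed for this inequality itself.
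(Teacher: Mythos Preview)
Your proof is essentially the paper's: both invoke the single-projector perturbation bound Eq.~\eqref{eq_proj2} and then telescope the product difference using $\|P_{j,m}\|,\|P^l_{j,m}\|\leq 1$, with the paper simply phrasing the telescoping as a one-line triangle inequality. The definitional subtlety you flag---that the lemma writes $P^l_{j,m}=\Gamma_{X_j^l}(P_{j,m})$ whereas Eq.~\eqref{eq_proj2} concerns the eigenprojector of the truncated $\mathcal{Z}_j$---is real and the paper glosses over it; your observation that the two agree to order $g(l)/\gamma$ is correct and at worst costs a harmless constant factor.
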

  \begin{proof}
    The proof utilises perturbation theory on the level of the single eigenprojectors $P^l_{j_i,m_i}$
    similar to the proof of Corollary \ref{cot}.
    Using the triangle inequality we can upper bound the norm difference as
    \begin{align}
      &&\|P_{j_1,m_1}\cdots P_{j_N,m_n} - P^l_{j_1,m_1} \cdots  P^l_{j_N,N_1}\|\nonumber\\
   &&   \leq \sum_{k=1}^N \|P_{j_k,m_k}-P^l_{j_k,m_k}\| \; .
    \end{align}
    The result now follows from Eq.~\eqref{eq_proj2} and our uniformity assumptions.
  \end{proof}

\end{document}